\renewcommand{\baselinestretch}{1.1}
\newenvironment{enumeratei}{\begin{enumerate}[label=\textup{(\roman*)}, noitemsep, topsep=1.5mm, labelindent=.8em, leftmargin=*, widest=.]}{\end{enumerate}}
\renewenvironment{itemize}{\begin{enumerate}[label=$\bullet$, noitemsep, topsep=1.5mm, labelindent=.8em, leftmargin=*, widest=.]}{\end{enumerate}}
\newcommand{\norm}[1]{{|#1|}}
\theoremstyle{plain}
\newtheorem{theorem}{Theorem}[section]
\newtheorem{cor}[theorem]{Corollary}
\newtheorem{prop}[theorem]{Proposition}
\newtheorem{claim}[theorem]{Claim}
\theoremstyle{definition}
\def\C{1.677}
\title{Coloring Hypergraphs Induced by Dynamic Point Sets and Bottomless Rectangles}
\author{
Andrei Asinowski
\and 
Jean Cardinal
\and 
Nathann Cohen
\and 
S\'ebastien Collette
\and 
Thomas Hackl
\and 
Michael Hoffmann
\and 
Kolja Knauer
\and
Stefan Langerman
\and 
Micha{\l} Laso\'n
\and 
Piotr Micek
\and 
G\"unter Rote
\and 
Torsten Ueckerdt
}
\address{Freie Universit\"at, Berlin -- {\tt asinowski@mi.fu-berlin.de}}
\address{Universit\'e Libre de Bruxelles -- {\tt jcardin@ulb.ac.be}}
\address{Université Paris-Sud 11 -- {\tt nathann.cohen@gmail.com}}
\address{Universit\'e Libre de Bruxelles -- {\tt me@scollette.com}}
\address{TU Graz -- {\tt thackl@ist.tugraz.at}}
\address{ETH Z\"urich -- {\tt hoffmann@inf.ethz.ch}}
\address{TU Berlin -- {\tt knauer@math.tu-berlin.de}}
\address{Universit\'e Libre de Bruxelles -- {\tt slanger@ulb.ac.be}}
\address{Jagiellonian University in Krakow -- {\tt mlason@tcs.uj.edu.pl}}
\address{Jagiellonian University in Krakow -- {\tt piotr.micek@tcs.uj.edu.pl}}
\address{Freie Universit\"at, Berlin -- {\tt rote@inf.fu-berlin.de}}
\address{Charles University in Prague -- {\tt torsten@kam.mff.cuni.cz}}
\begin{document}
\pagestyle{empty}

\begin{abstract}
We consider a coloring problem on dynamic, one-dimensional point sets: points appearing and disappearing on a line at given times. We wish to color them with $k$ colors so that at any time, any sequence of $p(k)$ consecutive points, for some function $p$, contains at least one point of each color. 

We prove that no such function $p(k)$ exists in general. However, in the restricted case in which points appear gradually, but never disappear, we give a coloring algorithm guaranteeing the property at any time with $p(k)=3k-2$. 

This can be interpreted as coloring point sets in $\mathbb{R}^2$ with $k$ colors such that any bottomless rectangle containing at least $3k-2$ points contains at least one point of each color. Here a bottomless rectangle is an axis-aligned rectangle whose bottom edge is below the lowest point of the set.

For this problem, we also prove a lower bound $p(k)>ck$, where $c>1.67$. Hence for every $k$ there exists a point set, every $k$-coloring of which is such that there exists a bottomless rectangle containing $ck$ points and missing at least one of the $k$ colors.

Chen {\em et al.} (2009) proved that no such function $p(k)$ exists in
the case of general axis-aligned rectangles. Our result also
complements recent results from Keszegh and P\'alv\"olgyi on
cover-decomposability of octants (2011, 2012).  
\end{abstract}
\maketitle
\sloppy

%%%%%%%%%%%%%%%%%%%%%%%
\section{Introduction}

% problem
It is straightforward to color $n$ points lying on a line with $k$ colors in such a way that any set of $k$ consecutive points receive different colors; just color them cyclically with the colors $1,2,\dots,k, 1, \dots$. What can we do if points can appear and disappear on the line, and we wish a similar property to hold at any time? More precisely, we fix the number $k$ of colors, and wish to maintain the property that at any given time, any sequence of $p(k)$ consecutive points, for some function $p$, contains at least one point of each color.

% results
We show that in general, such a function does not exist: there are dynamic point sets on a line that are impossible to color with two colors so that monochromatic subsequences have bounded length. This holds even if the whole schedule of appearances and disappearances is known in advance. This family of point sets is described in Section~\ref{sec:imposs}.

We prove, however, that there exists a linear function $p$ in the case where points can appear on the line at any time, but {\em never disappear}. Furthermore, this is achieved in a constructive, {\em semi-online} fashion: the coloring decision for a point can be delayed, but at any time the currently colored points yield a suitable coloring of the set. The algorithm is described in Section~\ref{sec:main}.

In Section~\ref{sec:bottom}, we restate the result in terms of a coloring problem in $\mathbb{R}^2$: for any integer $k\geq 1$, every point set in ${\mathbb R}^2$ can be colored with $k$ colors so that any {\em bottomless} rectangle containing at least $3k-2$ points contains one point of each color. Here, an axis-aligned rectangle is said to be bottomless whenever the $y$-coordinate of its bottom edge is $-\infty$. 

In Section~\ref{sec:lb}, we give lower bounds on the problem of coloring points with respect to bottomless rectangles. We show that the number of points $p(k)$ contained in a bottomless rectangle must be at least $1.67k$. 

Finally, in Section~\ref{sec:ck}, we consider an alternative problem in which we fix the size of the sequence to $k$, but we are allowed to increase the number of colors.

% literature and motivation
\subsection*{Motivations and previous works.}

The problem is motivated by previous intriguing results in the field of geometric hypergraph coloring. Here, a geometric hypergraph is a set system defined by a set of points and a set of geometric ranges, typically polygons, disks, or pseudodisks. Every hyperedge of the hypergraph is the intersection of the point set with a range.

It was shown recently~\cite{GV11} that for every convex polygon $P$, there exists a constant $c$, such that any point set in $\mathbb{R}^2$ 
can be colored with $k$ colors in such a way that any translation of $P$ containing at least $p(k)=ck$ points contains at least one point of each color.
This improves on several previous intermediate results~\cite{PT09,PT10,ACCLOR10}.

For the range spaces defined by translates of a given convex polygon, this corresponds to partitioning a given set of $n$ points into $k$ subsets, each subset being
an $\varepsilon$-net for $\varepsilon = ck/n$. More on the relation between this coloring problem and $\varepsilon$-nets can be found in the recent papers of 
Varadarajan~\cite{V10}, and Pach and Tardos~\cite{PT11}.

The problem for translates of polygons can be cast in its dual form as a covering decomposition problem: given a set of translates of a polygon $P$, we wish to
color them with $k$ colors so that any point covered by at least $p(k)$ of them is covered by at least one of each color. The two problems can be seen to be equivalent 
by replacing the points by translates of a symmetric image of $P$ centered on these points. The covering decomposition problem has a long history that dates back to conjectures by J\'anos Pach in the early 80s (see for instance \cite{P86,DG}, and references therein). The decomposability of coverings by unit disks was considered in a seemingly lost unpublished manuscript by Mani and Pach in 1986. Up to recently, however, surprisingly little was known about this problem. 

For other classes of ranges, such as axis-aligned rectangles, disks, translates of some concave polygons, or arbitrarily oriented strips~\cite{CPST09,PT10b,PTT05, P10}, such a coloring does not always exists, even when we restrict ourselves to two colors. For instance, the following result holds: for any integer $p\geq 2$, there exists a set of points in $\mathbb{R}^2$, every 2-coloring of which is such that there exists an open disk containing $p$ monochromatic points. Aloupis et al. gave positive results for axis-aligned strips~\cite{ACCIKLSST11}.

Keszegh~\cite{K07} showed in 2007 that every point set could be 2-colored so that any bottomless rectangle containing at least 4 points contains both colors. Later, Keszegh and P\'alv\"olgyi~\cite{KP11} proved the following cover-decomposability property of octants in $\mathbb R^3$: every collection of translates of the positive octant can be 2-colored so that any point of $\mathbb R^3$ that is covered by at least 12 octants is covered by at least one of each color. This result generalizes the previous one (with a looser constant), as incidence systems of bottomless rectangles in the plane can be produced by restricted systems of octants in $\mathbb R^3$. It also implies similar covering decomposition results for homothetic copies of a triangle. More recently, they generalized their result to $k$-colorings, and proved an upper bound of $p(k)<12^{2^k}$ on the corresponding function $p(k)$~\cite{KP12}. 

Our positive result on bottomless rectangles (Corollary~\ref{cor:main}) is a generalization of Keszegh's results~\cite{K07} to $k$-colorings. To our knowledge, this is the first example of a $k$-coloring achieving a linear bound on $p(k)$ for ranges that are not translates of a given convex body.

%%%%%%%%%%%%%%%%%%%%%%%
\section{Coloring dynamic point sets}
\label{sec:imposs}

A {\em dynamic point set} $S$ in $\mathbb{R}$ is a collection of triples $(v_i, a_i, d_i)\in\mathbb{R}^3$, with $d_i\geq a_i$, that is interpreted as follows: the point $v_i\in\mathbb{R}$ appears on the real line at time $a_i$ and disappears at time $d_i$. Hence, the set $S(t)$ of points that are present at time $t$ are the points $v_i$ with $t\in [a_i,d_i)$. A $k$-coloring of a dynamic point set assigns one of $k$ colors to each such triple.

We now show that it is not possible to find a 2-coloring of such a point set while avoiding long monochromatic subsequences at any time.

\begin{theorem}
\label{thm:imposs}
For every $p\in\mathbb{N}$, there exists a dynamic point set $S$ with the following property: for every 2-coloring of $S$, there exists a time $t$ such that $S(t)$ contains $p$ consecutive points of the same color.  
\end{theorem}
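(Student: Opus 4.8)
The plan is to build the dynamic point set recursively, so that a 2-coloring avoiding $p$ consecutive monochromatic points is forced to behave like a coloring avoiding $p-1$ consecutive monochromatic points on a smaller configuration, eventually reaching a contradiction. Concretely, I would first reduce the problem to a purely combinatorial game: a dynamic point set on $\mathbb{R}$ with known schedule is equivalent to a sequence of insertions and deletions into a linearly ordered list, and the adversary gets to choose, at each step, where to insert a new point (between two currently present points, or at either end) and which present point to delete. At each time $t$ the current list $S(t)$ must contain no $p$ consecutive points of the same color. The goal is to design a schedule that defeats every 2-coloring.

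The key construction is a gadget that ``amplifies'' a monochromatic run. Suppose inductively that we have a schedule $\Sigma_{p-1}$ that forces, at some moment, a block of $p-1$ consecutive points all of the same color, say positioned consecutively in the list with nothing between them. The idea for $\Sigma_p$: run many disjoint copies of $\Sigma_{p-1}$ one after another in time (reusing the same region of the line, deleting all points of one copy before starting the next), so that the adversary extracts, from copy number $j$, a monochromatic block $B_j$ of length $p-1$ in color $c_j\in\{1,2\}$. Since there are two colors, among sufficiently many copies we can select — with foresight, since the whole schedule including the adversary's forced responses is determined in advance — two copies, or rather arrange the copies so that two consecutive forced blocks have the \emph{same} color and can be made to \emph{coexist} at a common time, adjacent in the list. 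The deletions and insertions are choreographed so that when block $B_{j+1}$ is created, block $B_j$ is still present and sits immediately to its left; if $c_j=c_{j+1}$ this already gives $2(p-1)\geq p$ consecutive monochromatic points. The real work is to also handle the case $c_j\neq c_{j+1}$ by inserting, between the two blocks at the right moment, a single new point whose color (whatever it is) extends one of the two runs to length $p$. Iterating/branching this doubling argument over the at most $2^{\text{poly}}$ possible colorings, one reaches a schedule no 2-coloring survives.

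The cleanest version, and the one I would actually write, sets up the induction on $p$ directly with the statement: \emph{for every $p$ there is a schedule such that every $2$-coloring produces, at some time, $p$ consecutive points of one color, and moreover these $p$ points can be taken to be an entire ``fresh'' block that the adversary may then treat as an atom.} The base case $p=1$ is a single point. For the inductive step one nests the $\Sigma_{p-1}$ gadget inside a larger schedule that first forces one block of length $p-1$, freezes it, then forces a second block of length $p-1$ adjacent to it using points inserted to the right, and finally inserts one connector point; a short case analysis on the colors of the two blocks and the connector shows some run reaches length $p$. Keeping the two length-$(p-1)$ blocks simultaneously alive is exactly where the ``appear and disappear'' freedom is essential — this is precisely the feature the later positive result (insertions only) lacks.

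The main obstacle I anticipate is bookkeeping the coexistence of the two forced blocks: the inner gadget $\Sigma_{p-1}$ guarantees a monochromatic run \emph{at some time}, but that time may be one at which many auxiliary points of the gadget are also present, so one must argue that the auxiliary points can be deleted (and the block ``compacted'') while preserving the no-$p$-in-a-row invariant, and that this can be done for the first copy without having started the second. Making this robust likely forces the inner statement to be strengthened to ``forces a monochromatic run that can be isolated as a contiguous block with empty neighborhood on demand'', which is why I would bake that into the induction hypothesis from the start rather than patch it afterwards.
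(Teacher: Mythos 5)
Your inductive plan has a genuine gap rooted in adaptivity. A dynamic point set is a \emph{fixed} collection of triples $(v_i,a_i,d_i)$; the $2$-coloring is chosen only afterwards, as a function of this fixed collection. Your step, however, requires the schedule to react to the coloring: after running $\Sigma_{p-1}$ you want to delete the auxiliary points while keeping the monochromatic block $B_j$ alive so that $B_{j+1}$ can be brought alongside it, but \emph{which} $p-1$ points form $B_j$, and at \emph{which} time they become a monochromatic run, both depend on the coloring, so the deletion times $d_i$ cannot be set up in advance to implement this ``freeze.'' You flag this obstacle yourself and propose strengthening the inductive hypothesis to ``a run that can be isolated as a contiguous block on demand,'' but ``on demand'' still presupposes that the schedule can respond to the coloring, which it cannot. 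There is also an internal inconsistency in the description: you first say all points of copy $j$ are deleted before copy $j+1$ starts, and later need $B_j$ to still be present when $B_{j+1}$ forms. The closing remark about branching over all $2^{\mathrm{poly}}$ colorings is the germ of the right idea, but without a concrete way to merge those branches into one non-adaptive schedule the argument does not close; nor is it clear the doubling $2(p-1)\geq p$ survives the blowup that such branching would impose.

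The paper avoids adaptivity entirely. It reuses a non-$2$-colorable set system of Pach, Tardos and T\'oth on the vertices of a $p$-ary tree of depth $p$: the hyperedges are the $p$-element sibling sets and the $p$-element root-to-leaf paths; if every sibling set were bichromatic one could greedily follow a monochromatic path from the root to a leaf, so no proper $2$-coloring exists. Each vertex becomes one horizontal segment, produced by a \emph{fixed} recursion ($B_p^{j+1}$ from $p$ scaled copies of $B_p^{j}$) arranged so that every hyperedge appears as a block of $p$ consecutive segments on some vertical line. The branching over colorings that your plan needs is realized statically by the tree shape of this construction, so no freezing, compaction, or on-demand isolation is ever required. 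If you want to salvage your induction, the natural fix is to recast it exactly as this tree: rather than running copies of $\Sigma_{p-1}$ sequentially and adaptively selecting two of them, attach one copy to each of the $p$ possible ``continuations'' so that every coloring is defeated along some branch.
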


\begin{proof}
In order to prove this result, we work on an equivalent two-dimensional version of the problem. From a dynamic point set, we can build $n$ horizontal segments in the plane, where the $i$th segment goes from $(a_i,v_i)$ to $(d_i,v_i)$. At any time $t$ the visible points $S(t)$ correspond to the intervals that intersect the line $x=t$. It is therefore equivalent, in order to obtain our result, to build a collection of horizontal segments in the plane that cannot be 2-colored in such a way that any set of $p$ segments intersecting some vertical segment contains one element of each color.

Our construction borrows a technique from Pach, Tardos, and T\'oth~\cite{PTT05}. In this paper, the authors provide an example of a set system whose base set cannot be 2-colored without leaving some set monochromatic. This set system $\mathcal S$ is built on top of the $1+p+\dots+p^{p-1}=\frac {1-p^p}{1-p}$ vertices of a $p$-regular tree $\mathcal T^p$ of depth $p$, and contains two kinds of sets~:
\begin{itemize}
\item the $1+p+\dots+p^{p-2}$ sets of {\it siblings}: the sets of $p$ vertices having the same father,
\item the $p^{p-1}$ sets of $p$ vertices corresponding to a path from the root vertex to one of the leaves in $\mathcal T^p$.
\end{itemize}
It is not difficult to realize that this set system is not 2-colorable: by contradiction, if every set of siblings is non-monochromatic, we can greedily construct a monochromatic path from the root to  a leaf.

We now build a collection of horizontal segments corresponding to the vertices of $\mathcal T^p$, in such a way that for any set $E\in \mathcal S$ there exists a time $t$ at which the elements of $E$ are consecutive among those that intersect the line $x=t$. For any $p$ (see \figurename~\ref{fig:imposs}), the construction starts with a building block $B^1_p$ of $p$ horizontal segments, the $i$th segment going from $(-\frac ip,i)$ to $(0,i)$. Because these $p$ segments represent {\it siblings} in $\mathcal T^p$, they are consecutive on the vertical line that goes through their rightmost endpoint, and hence cannot all receive the same color.

% Block $B^{j+1}_p$ is built from a copy of $B^1_p$ to which are added $p$ resized and translated copies of $B^j_p$ : the $i$th copy lies in the rectangle with top-right corner $(-\frac {i-1} p,i+1)$ and bottom-left corner $(-\frac i p,i)$. With this construction, the ancestors of a segment are precisely those that are below it on the vertical line that goes through its leftmost point. When such sets of ancestors are of cardinality $p$, which only happens when one considers the set of ancestors of a leaf, the set is required to be non-monochromatic.

% By adding to $B_p^{p-1}$ a last horizontal segment below all others, corresponding to the root of $\mathcal T^p$, we ensure that a feasible 2-coloring of the segments would yield a proper 2-coloring of $\mathcal S$, which we know does not exist. 

% changed by TH
Block $B^{j+1}_p$ is built from a copy of $B^1_p$ to which are added
$p$ resized and translated copies of $B^j_p$ : the $i$th copy lies in
the rectangle with top-right corner $(-\frac {i-1} p,i+1)$ and
bottom-left corner $(-\frac i p,i)$.
By adding to $B_p^{p-1}$ a last horizontal segment below all others,
corresponding to the root of $\mathcal T^p$, the ancestors of a
segment are precisely those that are below it on the vertical line
that goes through its leftmost point.
When such sets of ancestors are of cardinality $p-1$, which only
happens when one considers the set of ancestors of a leaf, then the
set formed by the leaf and its ancestors is required to be
non-monochromatic.

With this construction we ensure that a feasible 2-coloring of the
segments would yield a proper 2-coloring of $\mathcal S$, which we
know does not exist.
\end{proof}

\begin{figure}[h!]
  \vspace{.2cm}
  \centering
  \subfigure[The tree ${\mathcal T}^3$.]{\includegraphics[scale=.5]{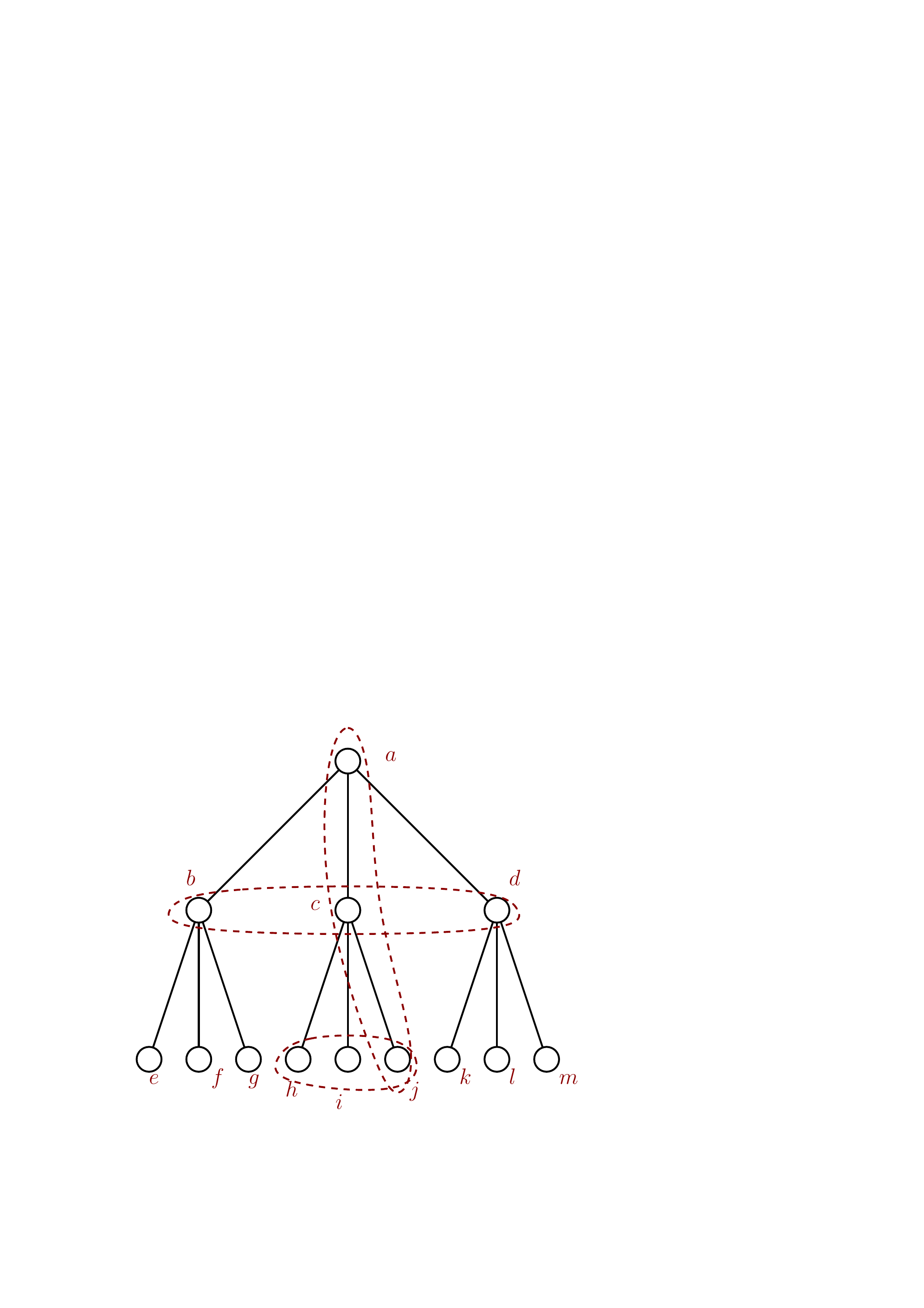}}
\hspace{.2cm}
  \subfigure[The corresponding set of horizontal segments $B_3^2$, with a root segment $a$.]{\includegraphics[scale=.5]{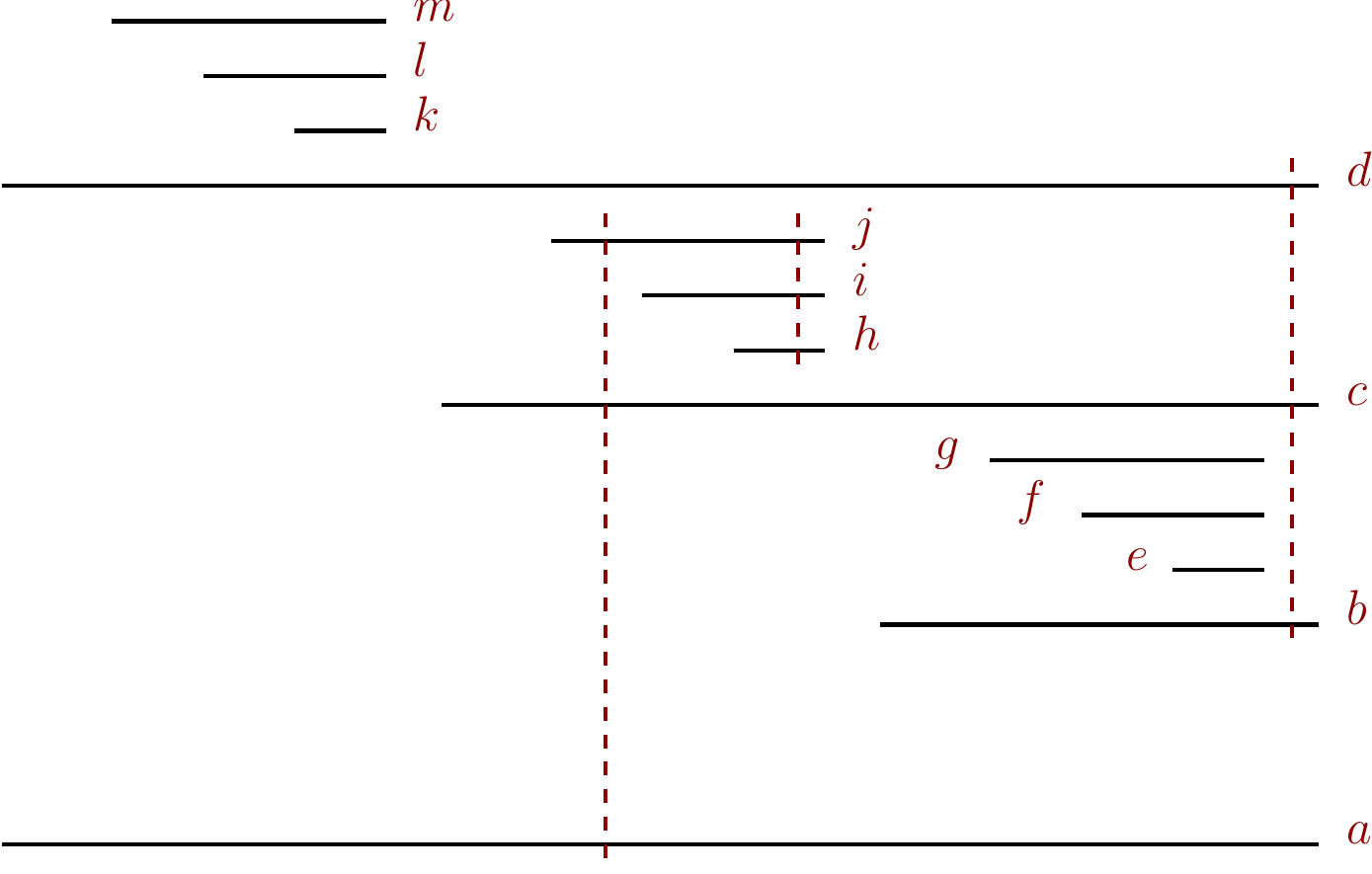}}
  \caption{The recursive construction of theorem~\ref{thm:imposs}, for $p=3$.}
  \label{fig:imposs}
\end{figure}

The above result implies that no function $p(k)$ exists for any $k$ that answers the original question. If it were the case, then we could simply merge
color classes of a $k$-coloring into two groups and contradict the above statement.

\begin{figure}
\begin{center}
\includegraphics[width=.4\textwidth]{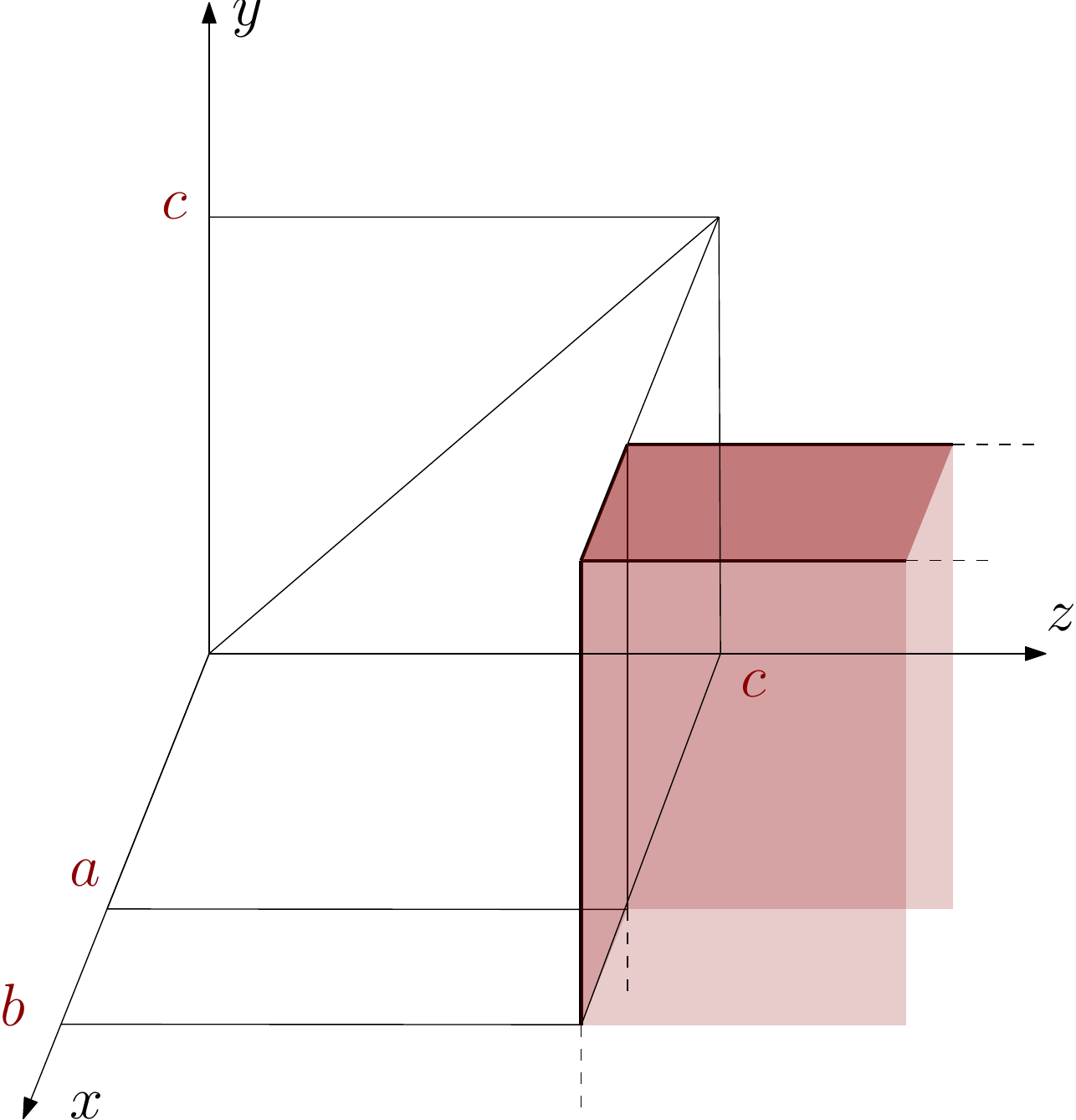}
\end{center}
\caption{\label{fig:corner}A corner with coordinates $(a,b,c)$.}
\end{figure}

Theorem~\ref{thm:imposs} can also be interpreted as the indecomposability of coverings by a specific class of unbounded polytopes in $\mathbb{R}^3$. We define a {\em corner} with coordinates $(a,b,c)$ as the following subset of $\mathbb{R}^3$: $\{(x,y,z)\in\mathbb{R}^3 : a\leq x\leq b, y\leq c, z\geq c \}$. An example is given in \figurename~\ref{fig:corner}. One can verify that a point $(x,y,z)$ is contained in a corner
$a,b,c$ if and only if the vertical line segment with endpoints $(x,y)$ and $(x,z)$ intersects the horizontal line segment with endpoints $(a,c)$ and $(b,c)$. The corollary follows.
\begin{cor}
\label{cor:indec}
For every $p\in\mathbb{N}$, there exists a collection $S$ of corners with the following property: for every 2-coloring of $S$, there exists a point $x\in\mathbb{R}^3$ contained in exactly $p$ corners of $S$, all of the same color. In other words, corners are not cover-decomposable. 
\end{cor}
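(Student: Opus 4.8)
The plan is to reuse the planar construction from the proof of Theorem~\ref{thm:imposs} essentially verbatim and to convert each horizontal segment into a corner via the equivalence stated just above. Concretely, fix $p\in\mathbb N$ and let $H$ be the collection of horizontal segments produced in the proof of Theorem~\ref{thm:imposs} (perturbing the construction slightly, if necessary, so that no two segments crossing a common vertical line have the same height). Write the endpoints of the $i$th segment as $(a_i,c_i)$ and $(b_i,c_i)$ with $a_i\le b_i$, and define $S$ to be the collection of corners with coordinates $(a_i,b_i,c_i)$, one for each segment of $H$. Any $2$-coloring of $S$ then transfers to a $2$-coloring of $H$ and vice versa.

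Next I would invoke Theorem~\ref{thm:imposs} in the segment form in which it was actually proved: for the given coloring of $H$ there is a vertical line $x=t$ such that, among the segments of $H$ meeting this line, some $p$ of them that are \emph{consecutive} in the vertical order of their heights along $x=t$ all receive the same color. Let $c_{(1)}<\dots<c_{(p)}$ be the heights of these $p$ segments on the line $x=t$. Because they are consecutive among all segments crossing $x=t$, I can choose reals $y<c_{(1)}$ and $z>c_{(p)}$ so that the open interval $(y,z)$ contains exactly these $p$ heights and none of the heights of the remaining segments crossing $x=t$ (pick $y$ and $z$ strictly between consecutive crossing heights, or arbitrarily far out if the monochromatic block sits at the very top or bottom). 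Put $x:=t$ and consider the point $q=(x,y,z)\in\mathbb R^3$.

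Finally I would verify, using the observation preceding the corollary, that $q$ is contained in exactly those $p$ corners that correspond to the monochromatic block. Indeed $q$ lies in the corner $(a_i,b_i,c_i)$ if and only if the vertical segment with endpoints $(x,y)$ and $(x,z)$ meets the horizontal segment with endpoints $(a_i,c_i)$ and $(b_i,c_i)$, which happens precisely when $a_i\le t\le b_i$ and $y\le c_i\le z$; by the choice of $t$, $y$, and $z$ these two conditions hold for exactly the $p$ chosen segments. Hence $q$ is covered by exactly $p$ corners of $S$, all of the same color, which is the statement. I do not expect a genuine obstacle here: the only step requiring care is that the vertical extent $[y,z]$ must single out \emph{exactly} the monochromatic block and nothing else, and that is exactly what the ``$p$ consecutive points'' clause of Theorem~\ref{thm:imposs} guarantees.
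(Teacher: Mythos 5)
Your proposal is correct and follows exactly the route the paper intends: the paper itself merely states the segment-corner equivalence and then asserts ``The corollary follows,'' and your argument is precisely the careful unpacking of that step (choosing $t$ from Theorem~\ref{thm:imposs}, bracketing the $p$ monochromatic heights by $y<z$, and checking the containment criterion). Nothing is missing; the perturbation remark and the choice of $y,z$ to isolate exactly the monochromatic block are the right details to make the ``exactly $p$'' clause hold.
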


%%%%%%%%%%%%%%%%%%%%%%%
\section{Coloring point sets under insertion}
\label{sec:main}

Since we cannot bound the function $p(k)$ in the general case, we now consider a simple restriction on our dynamic point sets: we let the deletion times $d_i$ be infinite for every $i$. Hence, points appear on the line, but never disappear.

A natural idea to tackle this problem is to consider an online coloring strategy, that would assign a color to each point in order of their arrival times $a_i$, without any knowledge of the points appearing later. However, we cannot guarantee any bound on $p(k)$ unless we delay some of the coloring decisions. To see this, consider the case $k=2$, and call the two colors red and blue. An online algorithm must color each new point in red or blue as soon as it is presented. We can design an adversary such that the following invariant holds: at any time, the set of points is composed of a sequence of consecutive red points, followed by a sequence of consecutive blue points. The adversary simply chooses the new point to lie exactly between the two sequences at each step.

Our computation model will be {\em semi-online}: The algorithm considers the points in their order of the arrival time $a_i$. At any time, a point in the sequence either has one of the $k$ colors, or is uncolored. Uncolored points can be colored later, but once a point is colored, it keeps its color for the rest of the procedure. At any time, the colors that are already assigned suffice to satisfy the property that any subsequence of $3k-2$ points has one point of each color, i.e., $p(k)\leq 3k-2$. 

\begin{theorem}
\label{thm:main}
Every dynamic point set without disappearing points can be $k$-colored in the semi-online model such that at any time, every subsequence of at least $3k-2$ consecutive points contains at least one point of each color.
\end{theorem}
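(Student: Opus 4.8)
The plan is to maintain, as points arrive one by one from left to right in time but placed at arbitrary positions on the line, a coloring of a subset of the points (the "colored" points) together with a controlled set of uncolored points, so that the invariant "every $3k-2$ consecutive colored points contain all $k$ colors" is preserved. I would think of the currently colored points, read in their left-to-right spatial order, as a word over the alphabet of $k$ colors, and design the algorithm so that this word is always a concatenation of blocks, each block being a permutation of the $k$ colors (or, more flexibly, so that any window of $3k-2$ consecutive colored points is guaranteed to meet every color). The reason to allow a delay is exactly the obstruction in the $k=2$ discussion preceding the theorem: a new point can land between a long red run and a long blue run, so we must be able to postpone its color until enough of its neighbors are known.

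The key steps, in order, would be: (1) Set up the data structure: partition the colored points into maximal blocks such that within the part of the sequence that is already "finalized" every $k$ consecutive colored points form a rainbow (all colors distinct), and keep at most a bounded number — at most $2k-2$, say — of uncolored points between consecutive finalized blocks, tracking which colors are still "missing" on each side. (2) Handle an insertion: when a new point $v$ arrives at some position, it falls into one of the gaps between finalized blocks; add it to the pool of uncolored points of that gap. (3) Trigger a flush: once a gap accumulates enough uncolored points that we can no longer guarantee the $3k-2$ window property if an adversary surrounded them badly, recolor/assign colors to the uncolored points in that gap. The coloring rule for a gap bounded on the left by a block ending with colors $c_1,\dots,c_r$ (in order) and on the right by a block beginning with colors $d_1,\dots,d_s$ should be chosen so that, reading left to right, we first complete the left block to a rainbow, then insert full rainbow permutations, then prepend the missing colors needed by the right block — a "greedy from both ends" assignment. (4) Verify the window bound: show that after any flush, and in fact at every moment, any $3k-2$ consecutive colored points span at most three consecutive "rainbow-ish" blocks, two of which are complete rainbows, hence contain all $k$ colors; the constant $3k-2$ rather than $2k-1$ is exactly the slack needed to absorb one partially-filled block on each end plus one guaranteed-complete block in the middle. (5) Check that the algorithm is genuinely semi-online — no colored point ever changes color, only uncolored points get colored — and that the gap sizes stay bounded so the invariant is maintainable forever.

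The main obstacle I expect is step (4) together with getting the flushing threshold right: one has to argue that the uncolored points sitting in a gap never cause a violation \emph{before} they are flushed, and that \emph{after} flushing the new colored points integrate with both neighboring blocks without creating a long monochromatic — or more precisely color-deficient — window at the seams. The delicate point is the interaction at a seam where a flushed gap meets an old block: the last few colors of the old block and the first few of the newly colored gap must jointly avoid repeating a color too soon, which forces the "greedy from both ends" rule to be stated carefully and may require that a gap be flushed while still relatively small (so that there is room to reconcile both boundary constraints inside it). I would also need a short argument that two flush operations never conflict, e.g. by always flushing the leftmost overfull gap first, or by showing gaps are independent. Once the seam analysis is in place, the bound $p(k)\le 3k-2$ should fall out by a direct window-counting argument as in step (4).
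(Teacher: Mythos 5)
Your proposal is a genuinely different strategy from the paper's, but as written it has gaps that are not mere bookkeeping and would require a different key idea to close.

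The most serious problem is in step (2): you assume that a newly arriving point ``falls into one of the gaps between finalized blocks.'' In the dynamic model the adversary places a point anywhere on the line, including strictly between two already-colored points of a finalized rainbow block. Since the semi-online model forbids recoloring, you cannot repair the block; it simply stops being a rainbow, and your entire invariant (every $k$ consecutive colored points in a finalized region form a rainbow) is destroyed by a single insertion. Nothing in your sketch says how the block structure is maintained or gracefully degraded under such insertions, and this is not a peripheral case --- it is the generic case. A related slip: your invariant talks about ``$3k-2$ consecutive \emph{colored} points,'' but the theorem concerns $3k-2$ consecutive points, colored or not. An uncolored point occupies a slot in the window without contributing a color, so it must count toward the $3k-2$; your pool of uncolored points between blocks therefore directly eats into the window and must be part of the budget, which your step (4) accounting does not do.

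You also flag but do not resolve the ``greedy from both ends'' reconciliation at a seam. This is exactly the kind of case analysis the paper's argument avoids altogether. The paper does not maintain any block structure. It instead defines, for each color $i$, the \emph{gaps} for color $i$ (maximal runs with no point colored $i$, uncolored points included) and maintains two invariants: (a) every gap has at most $3k-3$ points, and (b) every interior gap for a used color $i$ has at least $k-1$ points. When an insertion creates a gap of size $3k-2$ for some color $i$, write it as $\ell_1,\dots,\ell_{k-1},m_1,\dots,m_k,r_1,\dots,r_{k-1}$; invariant (b) forces each of the other $k-1$ colors to appear at most once among the middle $m$-points, so some $m_j$ is uncolored and can be assigned color $i$, and the $k-1$ flanking points on each side restore (b) for the two new $i$-gaps. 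This sidesteps block boundaries, seam reconciliation, and insertion-inside-a-block entirely, because gaps are defined relative to each color independently and only ever need to be split, never merged or rebuilt. If you want to salvage a block-based proof you would need to explain how blocks survive (or are rebuilt after) arbitrary interior insertions without recoloring, and I do not see how to do that without essentially reinventing the gap invariant.
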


\begin{proof}
We define a {\em gap for color} $i$ as a maximal interval (set of consecutive points) containing no point of color $i$, 
that is, either between two successive occurrences of color~$i$, 
or before the first occurrence (first gap), 
or after the last occurrence (last gap),
or the whole line if no point has color $i$. 
A {\em gap} is simply a gap for color $i$, for some $1\leq i\leq k$. 
We propose an algorithm for a semi-online model keeping the sizes of all gaps to be at most $3k-3$. 
This means every set of $3k-2$ consecutive points contains each color at least once and implies $p(k)\leq 3k-2$.
The algorithm maintains two invariants:\\

{\it (a) every gap contains at most $3k-3$ points;}\\

{\it (b) if there is some point colored with $i$ then every gap for
  color $i$, except the first and the last gap, contains at least $k-1$ points.}\\

% explain the algorithm

The two invariants are vacuous when the set of points is empty. 
Now, suppose that the invariants hold for an intermediate set of points and consider a new point on the line presented by an adversary. 
Clearly, invariant {\it (b)} cannot be violated in the extended set as no gaps decrease in size. 
However, there may arise some gaps of size $3k-2$ violating {\it (a)}. 
If not then the invariants hold for the extended set and the algorithm does not color any point in this step.
Suppose there are some gaps of size $3k-2$, consider one of them, say a gap of color $i$, and denote the points in the gap in their natural ordering on the line from left to right as $(\ell_1, \ldots, \ell_{k-1}, m_1, \ldots, m_k, r_1,\ldots r_{k-1})$. 
Now, color $i$ does not appear among these points. 
Invariant {\it (b)} yields that none of the $k-1$ remaining colors appears twice among $m_1, \ldots, m_k$. 
Thus, there is some $m_j$, which is uncolored and the algorithm colors it with $i$. 
This splits the large gap into two smaller gaps. 
Moreover, since there are $k-1$ $\ell$-points and $k-1$ $r$-points invariant {\it (b)} is maintained for both new $i$-gaps. 
The algorithm repeats that process until all gaps are of size at most $3k-3$.

This concludes the proof, as after the algorithm ends all remaining
uncolored points can be arbitrary colored.
\end{proof}

%%%%%%%%%%%%%%%%%%%%%%%
\section{Coloring points with respect to bottomless rectangles}
\label{sec:bottom}

A {\em bottomless rectangle} is a set of the form $\{ (x,y)\in \mathbb{R}^2 : a \leq x \leq b, y \leq c \}$, for a triple of real numbers $(a,b,c)$ with $a\leq b$. We consider the following geometric coloring problem: given a set of points in the plane, we wish to color them with $k$ colors so that any bottomless rectangle containing at least $p(k)$ points contains at least one point of each color. It is not difficult to realize that the problem is equivalent to that of the previous section.

\begin{cor}
\label{cor:main}
Every point set $S\subset \mathbb{R}^2$ can be colored with $k$ colors so that any bottomless rectangle containing at least $3k-2$ points of $S$ contains at least one point of each color.
\end{cor}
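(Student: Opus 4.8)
The plan is to derive the corollary directly from Theorem~\ref{thm:main}, by observing that a static point set equipped with the family of bottomless rectangles is, combinatorially, exactly a dynamic point set under insertion together with its time-snapshots $S(t)$. The reduction carries the whole argument; there is no real obstacle beyond setting up the dictionary and disposing of degenerate configurations.

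First I would fix an arbitrary tie-breaking rule on points sharing an $x$-coordinate, so that for every horizontal line the points of $S$ weakly below it become linearly ordered from left to right. To each point $(x_i,y_i)\in S$ associate the triple $(v_i,a_i,d_i):=(x_i,\,y_i,\,+\infty)$; this defines a dynamic point set $S'$ with no disappearing points, in which the position on the line is the $x$-coordinate and the appearance time is the $y$-coordinate. Then for every $t\in\mathbb{R}$ the snapshot $S'(t)=\{\,v_i : t\in[a_i,d_i)\,\}$ is precisely the set of points of $S$ with $y_i\le t$, listed in the fixed left-to-right order. Apply Theorem~\ref{thm:main} to $S'$ to obtain a $k$-coloring, and use the same colors on $S$.

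It then remains to see that bottomless rectangles behave like blocks of snapshots. For a bottomless rectangle $R=\{(x,y): a\le x\le b,\ y\le c\}$, the set $R\cap S$ consists of the points with $y_i\le c$ and $x_i\in[a,b]$; since $[a,b]$ is an interval, these form a \emph{consecutive} block within the snapshot $S'(c)$ (indeed, if $u$ and $w$ lie in $R\cap S$ and $z\in S'(c)$ lies between them in the order, then $a\le x_z\le b$, so $z\in R\cap S$). Hence, if $R$ contains at least $3k-2$ points of $S$, then $R\cap S$ is a block of at least $3k-2$ consecutive points of $S'(c)$, and Theorem~\ref{thm:main} guarantees that it meets every one of the $k$ color classes --- which is exactly the statement of the corollary. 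The only things needing a word of care are the innocuousness of the tie-breaking and the one-line check that $R\cap S$ is consecutive in $S'(c)$; both are routine, so the entire substance of the result sits in Theorem~\ref{thm:main}. (One can also verify the reverse reduction, making the two problems genuinely equivalent as claimed, though only this direction is needed here.)
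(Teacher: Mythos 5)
Your proof is correct and follows exactly the same reduction as the paper: sweep $S$ by increasing $y$-coordinate to obtain a dynamic point set on the $x$-axis with no disappearances, observe that a bottomless rectangle restricted to the sweep line $y\le c$ is a consecutive block in the snapshot $S'(c)$, and invoke Theorem~\ref{thm:main}. You merely spell out a couple of details (the tie-breaking and the consecutiveness check) that the paper leaves implicit.
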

\begin{proof}
The algorithm proceeds by sweeping $S$ vertically in increasing $y$-coordinate order. This defines a dynamic point set $S'$ that contains at time $t$ the $x$-coordinates of the points below the horizontal line of equation $y=t$. The set of points of $S$ that are contained in a bottomless rectangle $\{ (x,y)\in \mathbb{R}^2 : a \leq x \leq b, y \leq t \}$ correspond to the points in the interval $[a,b]$ in $S'(t)$. Hence, the two coloring problems are equivalent, and Theorem~\ref{thm:main} applies.
\end{proof}

\section{Lower Bound}
\label{sec:lb}

We now give a lower bound on the smallest possible value of $p(k)$.

\begin{theorem}
For any $k$ sufficiently large, there exists a point set $P$ such that for any $k$-coloring of $P$, 
there exists a color $i\in [k]$ and a bottomless rectangle containing at least $\C k - 2.5$ points, 
none of which are colored with color $i$.
\end{theorem}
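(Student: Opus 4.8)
The plan is to argue in the equivalent ``dynamic point set under insertion'' model. By the sweep correspondence behind Corollary~\ref{cor:main}, it suffices to produce, for every sufficiently large $k$, a point set on a line together with an insertion schedule (equivalently, a family of horizontal segments, or points in the plane together with bottomless rectangles) such that for every $k$-coloring there is a time $t$ at which some block of at least $\C k - 2.5$ consecutive points of $S(t)$ misses a color. In the language of the proof of Theorem~\ref{thm:main}, the task is to force a \emph{gap} of size at least $\C k - 2.5$ against an arbitrary colorer, and the trivial observation that $k-1$ points already miss a color must be boosted to a linear factor strictly above $1$.

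I would build the point set recursively as a nested family of gadgets $G_0 \subset G_1 \subset \dots$, each occupying an interval of the line and a contiguous block of the time axis. The base gadget $G_0$ is a block of $k-1$ points inserted left to right, already forced to avoid some color. The gadget $G_{\ell+1}$ is assembled from a constant number of rescaled copies of $G_\ell$ placed consecutively inside a fresh interval, together with $\Theta(k)$ auxiliary ``frame'' points inserted strictly later than everything in the copies and positioned so that, at the moment $G_{\ell+1}$ is completed, each copy (or its union with a controlled set of frame points) forms a consecutive block of the current point set. The invariant carried through the induction is: for suitable increasing sequences $s_\ell$ (block size) and $b_\ell$ (``colors spent'', with $b_\ell \to k$), every $k$-coloring of $G_\ell$ either already exhibits a gap of size $\geq \C k - 2.5$ inside $G_\ell$, or, when $G_\ell$ finishes, some designated sub-block of at least $s_\ell$ consecutive points avoids a color $i_\ell$ and is exposed on one side so that further points can be appended to it.

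For the inductive step I would combine two moves. A \emph{merge} move exploits two inner sub-blocks that avoid the same color: placing the copies adjacently, and routing the frame points so no point of that color separates them, fuses their avoided blocks into one of size about $2s_\ell$. An \emph{extension} move feeds fresh frame points, one at a time, alongside an exposed sub-block avoiding $i_\ell$: each such point either lengthens the avoided block or is colored $i_\ell$, but in the latter case the frame is arranged so that this point lies in a gap of $\geq k-1$ points of every other color, which already certifies a bad window for the chosen parameters. Tracking these effects gives recurrences of the shape $s_{\ell+1}\approx s_\ell+\Theta(k)$ together with a budget constraint forcing the recursion to stop once $b_\ell$ reaches $k$; optimizing the number of merge versus extension steps and the split of $k$ among the color sets of the copies is what produces the constant, and a careful accounting of the $O(1)$ loss per level yields $\C k - 2.5$. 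The main obstacle, and where essentially all the work lies, is making this invariant genuinely adversary-proof: the colorer may at any moment try to ``repair'' an avoided block by spending one of its colors inside it, and the nesting geometry together with the placement and timing of the frame points must be arranged so that every such repair is itself immediately caught as a large gap for another color, or can be absorbed at a bounded, accountable cost. Pinning down that geometry, and closing the resulting recursion with ratio as large as $\C=1.677$, is the crux; the remainder is lower-order bookkeeping.
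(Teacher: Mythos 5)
Your plan is genuinely different from the paper's argument, but it is not a proof: the steps you flag as ``the crux'' are precisely the steps that would constitute the proof, and they are left undone. The paper does not use a recursive gadget construction at all. It exhibits a single explicit point set $P(n,a) = L \cup B \cup R$, where $L$ and $R$ are two opposite diagonal chains of $n$ points each and $B$ is a row of $a \approx 0.655k$ points at the bottom, and then runs a global double-counting argument against an arbitrary fixed $k$-coloring. First it shows (by packing disjoint bottomless rectangles along $L$, each containing a prefix of $B$) that every color must be used many times in $L$, with low colors (those appearing in $B$) used even more often; a pigeonhole step then produces a color $c^*$ used on at most roughly $\frac{2n(1-\ln\frac{b+1}{b-a+1})+2a}{k-a}$ points in total. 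Second, it looks at the $q+1$ maximal $c^*$-avoiding bottomless rectangles containing all of $B$ and shows, via a local $3/2$-overlap estimate between consecutive rectangles, that they jointly cover at least $\frac{3}{2}(2n-q-b+a)$ points of $L \cup R$; a second pigeonhole step then yields one rectangle with about $\frac{3n}{q}+a$ points, which after plugging in the bound on $q$ and optimizing $a$ exceeds $\C k - 2.5$. The constant $\C = 1.677$ arises from this explicit optimization, not from closing a recursion.

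The gap in your proposal is concrete and fatal as written. (i) You never specify the gadget $G_{\ell+1}$: how many copies of $G_\ell$, where the frame points sit, and in what temporal order they arrive. Without this there is nothing to verify. (ii) The merge move assumes two inner sub-blocks ``avoid the same color,'' but the adversary colorer controls which colors are avoided; you would need a pigeonhole over many copies to force a collision, which changes the arity of the recursion and hence the constant, and you have not done that accounting. (iii) The extension move's dichotomy (``either lengthens the avoided block or creates a large gap for another color'') is exactly the adversary-proofness you admit you cannot establish. In a semi-online setting the colorer can delay decisions, and in the offline setting it sees the whole point set in advance, so a per-point case analysis of this kind needs a much more careful invariant. (iv) Even granting all of the above, the recurrences $s_{\ell+1} \approx s_\ell + \Theta(k)$ with a budget $b_\ell \to k$ would have to be solved and optimized to produce a ratio of $1.677$; you give no reason this particular constant, rather than, say, $1.5$ or $2$, would emerge. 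A lower bound proof must actually exhibit the point set and verify the forcing; a sketch that defers the forcing to ``bookkeeping'' has not proven the theorem.
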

\begin{proof}
Fix $k \geq 100$.
For $n \in \mathbb{N}$ and $0\leq a < k$ we define the point set $P = P(n,a)$ to be the union of point sets $L$, $R$ and $B$ (standing for left, right and bottom, respectively) as follows:
\begin{align*}
 L &:= \{ (i-n,2i) \in \mathbb{R}^2 \;|\; i \in [n] \} \\
 B &:= \{ (i,0) \in \mathbb{R}^2 \;|\; i \in [a] \} \\
 R &:= \{ (n+a+i,2n+1-2i) \in \mathbb{R}^2 \;|\; i \in [n] \}
\end{align*}
See Figure~\ref{fig:LB_pointset} for an illustration.
Note that $|L| = |R| = n$ and $|B| = a$.

\begin{figure}[htb]
 \centering
 \subfigure[\label{fig:LB_pointset}]
 {\includegraphics{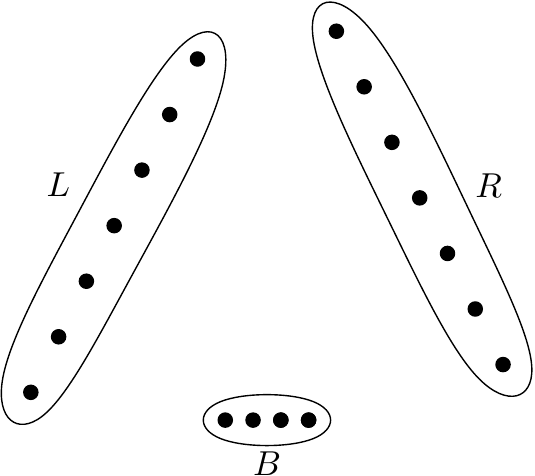}}
 \hfill
 \subfigure[\label{fig:LB_rectangles}]
 {\includegraphics{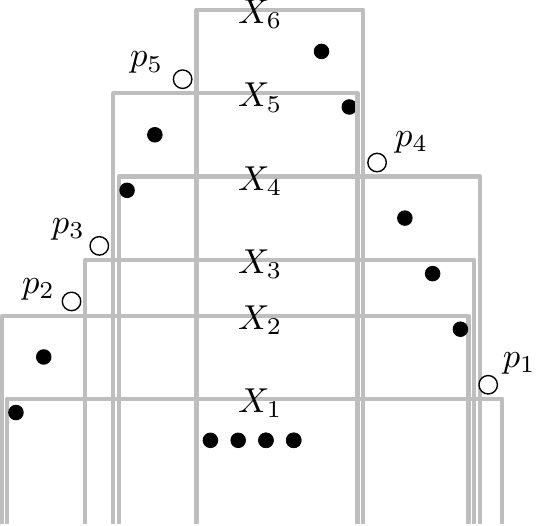}}
 \caption{(a) The point set $P = P(n,a)$ with $n = 7$ and $a = 4$, and (b) the bottomless rectangles $X_1,\ldots,X_6$ corresponding to the color class $P(c^*) = \{p_1,\ldots,p_5\}$.}
%  \label{fig:LB_pointset}
\end{figure}

Consider any coloring of the points in $P$ with colors from $[k]$. 
For a color $i\in[k]$ we define $P(i)$ to be the subset of points of $P$ colored with $i$. 
We assume for the sake of contradiction that every bottomless rectangle that contains $b := \lfloor \C k - 2.5\rfloor$ points, contains one point of each color.
In the remainder of the proof we will identify a bottomless rectangle containing $b'$ points but no point of one particular color.
We give a lower bound for $b'$ depending on $n$ and $a$, but independent of the fixed coloring under consideration. 
Taking sufficiently large $n$ and choosing $a = \lfloor 0.655 k \rfloor$ we will prove $b' > b$, which contradicts our assumption and hence concludes the proof.

A color used at least once for the points in $B$ is called a \emph{low color} and a point colored with a low color is a \emph{low point}. 
Note that there are low points outside of the set $B$.
Let $\ell$ be the number of low colors. 
Clearly, $\ell \leq \norm{B} = a$.

\begin{claim}\label{claim:green-colors}
\hfill
\begin{enumeratei}
  \item For every non-low color $c$ there are at least $\left\lfloor\frac{n}{b-a}\right\rfloor$ points of color $c$ in $L$.\label{enum:non-green-in-L}
  \item There are at least $\sum\limits_{i=0}^{\ell-1}\left\lfloor\frac{n}{b-i}\right\rfloor$ low points in $L$.\label{enum:green-in-L}
 \end{enumeratei}
\end{claim}
\begin{proof}[Proof of Claim~\ref{claim:green-colors}]
Fix a color $c\in[k]$ and assume that the $j$ leftmost points in $B$ are not colored with $c$. 
Order the points in $L$ colored with $c$ according to their $x$-coordinate: $p_1$, $p_2$,\ldots, $p_m$. 
Now for each $1<i\leq m$ there is a bottomless rectangle containing all points in $L$ between $p_{i-1}$ and $p_i$, and the leftmost $j$ points in $B$, and nothing else. 
Additionally, there is a bottomless rectangle containing all points in $L$ to the left of $p_1$ together with $j$ leftmost points in $B$, and a bottomless rectangle containing all points in $L$ to the right of $p_m$ together with $j$ leftmost points in $B$. 
Note that all these rectangles are disjoint within $L$ and each point from $L$ not colored with $c$ lies in exactly one such rectangle.
Since each such rectangle $X$ avoids the color $c$ we get that $\norm{X\cap P}\leq b - 1$ and $\norm{X\cap L}\leq b-1-j$ and therefore
\begin{gather}
 m+ (m+1)(b-1-j) = m(b - j) + b - j -1 \geq \norm{L} = n,\notag\\
 m\geq \left\lfloor \frac{n}{b - j}\right\rfloor.\label{eq:hmmm}
\end{gather}

In order to prove \ref{enum:non-green-in-L} consider a non-low color $c$. 
As $c$ is not used on points in $B$ at all we can put $j=a$ in \eqref{eq:hmmm} and the statement of \ref{enum:non-green-in-L} follows. 
Now, if $c$ is a low color, then $j$ defined as the maximum number of leftmost points in $B$ avoiding $c$ is always less than $a$.
However for different low colors $c$ we obtain different $j$.
Thus the sum of inequality~\eqref{eq:hmmm} over all low colors is minimized by $\sum_{i=0}^{\ell-1}\lfloor \frac{n}{b-i}\rfloor$, which gives~\ref{enum:green-in-L}.
\end{proof}

By Claim~\ref{claim:green-colors} \ref{enum:non-green-in-L} and \ref{enum:green-in-L} combined we get that there is a set $S$ of $k-a$ non-low colors such that at most $n - \sum_{i=0}^{a-1}\lfloor\frac{n}{b - i}\rfloor$ points in $L$ have a color from $S$.
Analogously, at most $n - \sum_{i=0}^{a-1}\lfloor\frac{n}{b - i}\rfloor$ in $R$ have a color from $S$.
Summing up we get:
\begin{align*}
\sum_{c \in S}|P(c)| &= \sum_{c \in S}\bigl(|P(c) \cap L| + |P(c) \cap R|\bigr) \\
&\leq 2n - 2\sum_{i=0}^{a-1}\left\lfloor\frac{n}{b - i}\right\rfloor\\
&\leq 2n - 2\sum_{i=0}^{a-1}\left(\frac{n}{b - i}-1\right)\\
& = 2n\left(1 - \sum_{i=b-a+1}^{b}\frac{1}{i}\right)+2a\\
& = 2n\left(1 - \sum_{i=1}^{b}\frac{1}{i} + \sum_{i=1}^{b-a}\frac{1}{i}\right)+2a.
\end{align*}
Using that $\sum_{i=1}^{x}\frac{1}{i} = \ln(x+1) - \sum_{j=1}^{\infty} \frac{B_j}{j(x+1)^j} + \gamma$ for every $x \geq 1$, 
where $B_j$ are the Bernoulli numbers of second kind and $\gamma$ is the Euler-Mascheroni constant, we obtain
\begin{align*}
\sum_{c \in S}|P(c)| &< 2n\left(1 - \ln(b+1) + \ln(b-a + 1)\right)+2a \\
& = 2n\left(1 - \ln\left(\frac{b+1}{b-a+1}\right)\right) + 2a.
\end{align*}
From the pigeonhole principle there is a color $c^* \in S$, such that
\begin{equation}
 q := |P(c^*)| \leq \left\lfloor \frac{2n(1 - \ln(\frac{b+1}{b-a+1})) + 2a}{k-a} \right\rfloor.\label{eq:color-class}
\end{equation}
Enumerate the points in $P(c^*)$ by $p_1,p_2,\ldots,p_q$ according to their increasing $y$-coordinates, i.e., we have $i < j$ if{f} $p_i$ has smaller $y$-coordinate than $p_j$.
Now we consider all maximal bottomless rectangles that completely contain $B$ and contain no point of color $c^*$.
There are exactly $q+1$ such rectangles:
For every point $p_i \in P(c^*)$ there is a bottomless rectangle $X_i$ whose top side lies immediately below $p_i$.
And one further bottomless rectangle $X_{q+1}$ containing the entire strip between $L$ and $R$, and with sides bounded by the point in $P(c^*) \cap L$ and the point in $P(c^*) \cap R$ with the highest index. See Figure~\ref{fig:LB_rectangles} for an illustration.

\begin{claim}\label{claim:3-halves}
 \begin{equation*}
  \sum_{i=1}^{q}|X_i\cap (L\cup R)| \geq \frac{3}{2}\bigl(2n - q - b + a\bigr).\label{eq:3-halves}
 \end{equation*}
\end{claim}
\begin{proof}[Proof of Claim~\ref{claim:3-halves}]
 We partition the points in $L \cup R$ that are not colored $c^*$ into $q+1$ sets $Y_1,\ldots,Y_{q+1}$ with consecutive $y$-coordinates, 
 such that $Y_i \subset X_i \cap (L\cup R)$ for all $i=1,\ldots,q+1$.
 Clearly, $|X_i \cap Y_i| = |Y_i|$.
 We claim that $|X_{i+1} \cap Y_i| \geq \frac{1}{2}|Y_i|$, for $i = 1,\ldots,q$.
 Without loss of generality, let us assume that $p_i \in L$.
 Then either $Y_i=\emptyset$ or the point in $Y_i$ with largest $y$-coordinate lies in $R$.
 Since points from $L$ and $R$ alternate in the ordering of $L\cup R$ with respect to increasing $y$-coordinate it follows that $Y_i$ is almost equally 
partitioned into its left part $Y_i \cap L$ and its right part $Y_i \cap R$.
 Since the topmost point in $Y_i$ lies in $R$ we have $|Y_i \cap R| \geq \frac{1}{2}|Y_i|$.
 Now since $p_i\in L$ we have $X_{i+1} \supset Y_i\cap R$, and thus
 \begin{equation}
  |X_{i+1} \cap Y_i| \geq |Y_i \cap R| \geq \frac{1}{2}|Y_i|. \label{eq:1halfYi}
 \end{equation}
Note also that $|X_{q+1}\cap Y_q|< b-a$ as $X_{q+1}$ avoids color $c^*$, so $|X_{q+1}|<b$, and contains all $a$ points in $B$.
Now we calculate
\begin{align*}
 \sum_{i=1}^{q}|X_i\cap (L\cup R)| &\geq \bigl(\sum_{i=1}^{q}|X_i \cap Y_i| + |X_{i+1} \cap Y_i|\bigr) - |X_{q+1} \cap Y_q|\\
 &\overset{\eqref{eq:1halfYi}}{\geq} \sum_{i=1}^{q} \frac{3}{2}|Y_i| - |X_{q+1} \cap Y_{q}|\\
 &= \frac{3}{2}\bigl(2n-|P(c^*)|-|Y_{q+1}|\bigr) - |X_{q+1} \cap Y_{q}|\\
 &\geq \frac{3}{2}\bigl(2n-q-|X_{q+1} \cap (L \cup R)|\bigr) \geq \frac{3}{2}\bigl(2n-q-(b-a)\bigr).
\end{align*}
\end{proof}

From Claim~\ref{claim:3-halves} we get from the pigeonhole principle that there is a bottomless rectangle $X^* \in \{X_1,\ldots,X_q\}$ with
\begin{align*}
 |X^*| &\geq \frac{\frac{3}{2}(2n - q - b + a)}{q} + a\\
 &= \frac{3n}{q} - \frac{3}{2} - \frac{3(b-a)}{2q} + a\\
 &\overset{\eqref{eq:color-class}}{\geq} \frac{3(k-a)}{2\bigl(1-\ln\bigl(\frac{b+1}{b-a+1}\bigr) + \frac{2a}{n}\bigr)} + a - \frac{3}{2} - \frac{3(b-a)}{2q}
\end{align*}
Now, if we increase $n$, then $q = |P(c^*)|$ increases as well, and for sufficiently large $n$ the terms $\frac{2a}{n}$ in the denominator and the additive term $\frac{3(b-a)}{2q}$ become negligible.
In particular, with $a:= \lfloor 0.655k\rfloor$ and $b = \lfloor \C k -2.5\rfloor$ and sufficiently large $n$ we have
\begin{align*}
 |X^*| &\geq \frac{3(k-a)}{2\bigl(1-\ln\bigl(\frac{b+1}{b-a+1}\bigr)\bigr)} + a - \frac{3}{2}\\
 &= \frac{3(k-\lfloor 0.655k\rfloor)}{2\bigl(1-\ln\bigl(\frac{\lfloor \C k - 2.5\rfloor+1}{\lfloor \C k - 2.5\rfloor-\lfloor 0.655k\rfloor+1}\bigr)\bigr)} + \lfloor 0.655k\rfloor - \frac{3}{2} \\
 &\geq \frac{3(k-0.655k-1)}{2\bigl(1-\ln\bigl(\frac{\C k - 2.5-1+1}{\C k - 2.5-0.655k-1+1}\bigr)\bigr)} + 0.655k -1 - \frac{3}{2}\\
 &= \frac{1.035k - 3}{2\bigl(1-\ln\bigl(\frac{\C k - 2.5}{1.022k - 2.5}\bigr)\bigr)} + 0.655k - 2.5.
\end{align*}
For $k \to \infty$ (starting with $k \geq 3$) the above expression is monotonously increasing and its limit is given by
\[
 \Bigl(\frac{1.035}{2\bigl(1-\ln\bigl(\frac{\C}{1.022}\bigr)\bigr)} + 0.655 \Bigr)k > 1.68k.
\]
Hence if $k$ is big enough ($k \geq 100$ is actually enough) the bottomless rectangle $X^*$ contains strictly more than $\C k -2.5$ points but no point of color $c^*$, which is a contradiction and concludes the proof.
\end{proof}

\section{Increasing the Number of Colors}
\label{sec:ck}

There is another problem which can be tackled this time in an {\em online} model. The number $c(k)$ is the minimum number of colors needed to color the points on a line such that any set of at most $k$ consecutive points is completely colored by distinct colors. The same problem has been considered for other types of geometric hypergraphs by Aloupis et al.~\cite{ACCLS09}. 

Again, the algorithm considers the points in their order of the arrival time $a_i$ but now colors them immediately. At any time, a point in the sequence has one of $2k-1$ colors. The colors that are already arrived satisfy the property that any subsequence of $k$ points has no color twice, i.e.,  $c(k)\leq 2k-1$. 

\begin{prop}
\label{prop:c(k)}
Every dynamic point set without disappearing points can be $(2k-1)$-colored in the online model such that at any time, every subsequence of at least $k$ consecutive points contains no color twice.
\end{prop}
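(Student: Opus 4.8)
The plan is to analyse the natural greedy online rule. When a new point $v$ arrives, it falls into some gap of the currently present sequence $p_1 < \cdots < p_n$, say between $p_j$ and $p_{j+1}$ (the cases where $v$ is leftmost or rightmost are analogous). Let $F(v)$ be the set of colours used on the $k-1$ nearest present points to the left of $v$ together with the $k-1$ nearest present points to its right (fewer than $k-1$ on a side if $v$ is close to an end of the sequence). Since $F(v)$ is read off from at most $2(k-1)=2k-2$ points, it contains at most $2k-2$ of the $2k-1$ available colours, so $[2k-1]\setminus F(v)\neq\emptyset$; the algorithm assigns to $v$ an arbitrary colour from this set. In particular the algorithm never gets stuck, and it is online.

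What remains is to verify, by induction on the number of insertions, that after every step every set of $k$ consecutive present points receives pairwise distinct colours (equivalently, every set of \emph{at most} $k$ consecutive points does). The empty point set is the trivial base case. For the inductive step, assume the invariant holds and insert $v$ between $p_j$ and $p_{j+1}$ as above. The crucial observation is that a block of $k$ consecutive points of the new sequence either avoids $v$, in which case it was already a block of $k$ consecutive points before the insertion and hence is rainbow by the inductive hypothesis, or it contains $v$; indeed, inserting $v$ does not change which sets of points disjoint from $v$ are ``consecutive''. So only the blocks through $v$ have to be checked.

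Such a block consists of $v$, together with the $a$ points $p_{j-a+1},\dots,p_j$ immediately to its left and the $k-1-a$ points $p_{j+1},\dots,p_{j+k-1-a}$ immediately to its right, for some $0\le a\le k-1$. The $k-1$ points other than $v$ formed $k-1$ consecutive points before the insertion, so they already carry pairwise distinct colours by the inductive hypothesis; and each of them lies among the $k-1$ nearest left neighbours or the $k-1$ nearest right neighbours of $v$, so its colour belongs to $F(v)$ and therefore differs from the colour of $v$. Hence the block is rainbow. This completes the induction, and a set of fewer than $k$ consecutive points is rainbow a fortiori, which is the assertion of the proposition.

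I do not expect a genuine obstacle: this is essentially the standard greedy colouring of interval-type structures, adapted to online insertions on a line. The only points requiring care are the bookkeeping in the inductive step — that an insertion creates no new $k$-block that avoids the inserted point — and checking that the union of the ``left parts'' and ``right parts'' of all $k$-blocks through $v$ is exactly the set of at most $2k-2$ neighbours the algorithm was instructed to avoid, so that the single colour chosen for $v$ simultaneously resolves every newly created constraint.
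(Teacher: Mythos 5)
Your proposal is correct and uses the same greedy online algorithm as the paper: colour each arriving point with any colour not seen among its $k-1$ nearest present neighbours on each side. The paper states this in two sentences and leaves the verification implicit, whereas you carry out the induction explicitly; the substance is identical.
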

\begin{proof}
At the arrival of a new point $p$ denote by $(\ell_1,\ldots, \ell_{k-1})$ and $(r_1,\ldots, r_{k-1})$ the $k-1$ points to its left and to its right, respectively. Together they have at most
$2k-2$ colors, Thus, there is at least one of the $2k-1$ colors unused among these points. The algorithm colors $p$ with this color.
\end{proof}

Now, we consider the following geometric coloring problem: given a set of points in the plane, we wish to color them with $c(k)$ colors so that any bottomless rectangle containing at most $k$ points contains at most one point of each color and all points are colored. It is not difficult to realize that the problem is equivalent to that of the previous section. One can use Proposition~\ref{prop:c(k)} to show:
\begin{cor}
\label{cor:c(k)}
Every point set $S\subset \mathbb{R}^2$ can be colored with $2k-1$ colors so that any bottomless rectangle containing at least $k$ points of $S$ contains no color twice.
\end{cor}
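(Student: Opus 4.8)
The plan is to reuse, essentially verbatim, the reduction already employed in the proof of Corollary~\ref{cor:main}, but now feeding it Proposition~\ref{prop:c(k)} in place of Theorem~\ref{thm:main}. First I would sweep the point set $S$ with a horizontal line moving upward, processing the points of $S$ in order of increasing $y$-coordinate (with ties broken in some fixed way, say left-to-right). This produces a dynamic point set $S'$ on the real line in which the point with $x$-coordinate $x(s)$ appears at time $y(s)$ and never disappears; thus at time $t$ the present set $S'(t)$ is exactly the multiset of $x$-coordinates of those points of $S$ lying weakly below the line $y=t$, ordered along the line.

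The second step is the geometric-combinatorial dictionary. For a bottomless rectangle $R=\{(x,y): a\le x\le b,\ y\le c\}$, the points of $S$ inside $R$ are precisely those whose $x$-coordinate lies in $[a,b]$ and whose $y$-coordinate is at most $c$; on the line at time $c$ these form exactly the points of $S'(c)$ falling in the interval $[a,b]$, i.e.\ a block of \emph{consecutive} present points. Conversely, every block of consecutive points of $S'(c)$ is cut out by some bottomless rectangle with top edge at height $c$. Hence a coloring of $S$ has the property that every bottomless rectangle containing at least $k$ points (equivalently, since more than $2k-1$ points can never be rainbow with $2k-1$ colors, every bottomless rectangle containing at most $k$ points) uses no color twice if and only if the induced coloring of $S'$ has the property that, at every time $t$, every set of the relevant number of consecutive present points uses no color twice. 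The online nature is preserved because the sweep reveals the points of $S'$ exactly in arrival order.

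The third step is simply to invoke Proposition~\ref{prop:c(k)}: it supplies, in the online model, a $(2k-1)$-coloring of $S'$ with precisely the required property on consecutive present points, and transporting this coloring back through the sweep (color $s$ with the color assigned to $x(s)$) yields the desired $(2k-1)$-coloring of $S$.

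I do not anticipate a genuine obstacle; the only care needed is with coincident coordinates. Ties in $y$ must be linearized into a definite arrival order so that $S'$ is an honest dynamic point set, and one should observe, by a standard perturbation argument, that the top edge of any bottomless rectangle may be pushed infinitesimally without changing the captured point set, so that it can be taken to lie strictly above a prefix of any tie-group. Coincident $x$-coordinates are harmless, since "consecutive on the line" is well defined once the sweep order is fixed. Once these routine points are settled, the corollary follows exactly as Corollary~\ref{cor:main} follows from Theorem~\ref{thm:main}.
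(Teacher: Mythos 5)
Your argument is correct and is exactly the paper's intended proof: the paper gives no separate argument for Corollary~\ref{cor:c(k)}, merely remarking that it follows from Proposition~\ref{prop:c(k)} by the same vertical-sweep reduction that derives Corollary~\ref{cor:main} from Theorem~\ref{thm:main}, which is precisely what you carry out. One small aside: your parenthetical claim that ``at least $k$'' and ``at most $k$'' are equivalent is not literally right (the ``at least $k$'' reading is unattainable as soon as some bottomless rectangle holds more than $2k-1$ points), but what you in fact establish is the meaningful ``at most $k$'' statement, matching the problem description the paper gives just before the corollary, so the proof itself is unaffected.
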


The number of colors used in Corollary~\ref{cor:c(k)} is smallest possible. This is witnessed by a point set $S$ consisting of $k$ points of the form $\{(i,2i)\mid 0\leq i\leq k-1\}$
and $k-1$ points of the form $\{(2k-i,2i-1)\mid 1\leq i\leq k-1\}$, see \figurename~\ref{fig:ck_lb} for an example. It is easy to see that every pair of points in such a point set is in a common bottomless rectangle of size at most $k$.

\begin{figure}[h!]
  \centering
  {\includegraphics[scale=.5]{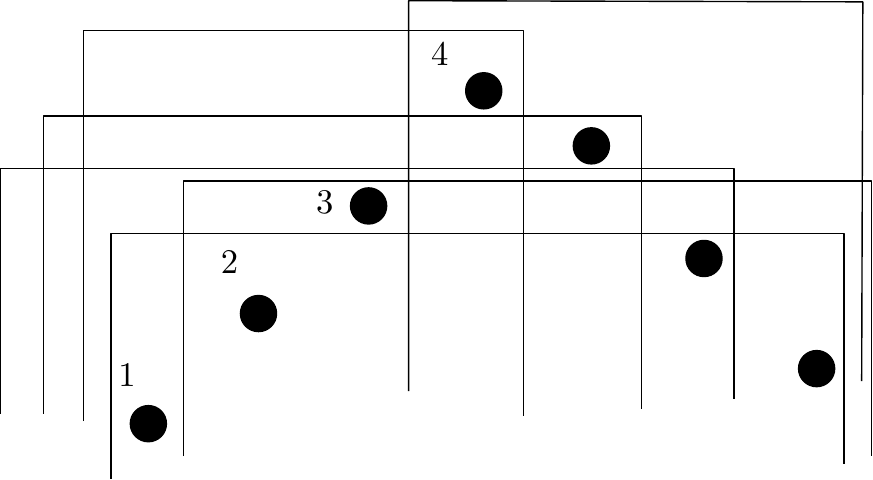}}
  \caption{A point set witnessing $c(k)\geq 2k-1$ for $k=4$.}
  \label{fig:ck_lb}
\end{figure}

Finally, let us remark that an upper bound on $c(k)$ for dynamic point sets in which points can both appear and disappear, as in Section~\ref{sec:imposs}, can be obtained by bounding 
the chromatic number of the corresponding so-called {\em bar $k$-visibility graph}, as defined by Dean et al.~\cite{DEGLST07}. In particular, they show that those graphs have $O(kn)$ edges, yielding $c(k)=O(k)$
for that case.

%%%%%%%%%%%%%%%%%%%%%%%
\section*{Acknowledgments}

This research is supported by the the ESF EUROCORES programme EuroGIGA, CRP ComPoSe ({\tt http://www.eurogiga-compose.eu}).
It was initiated at the ComPoSe kickoff meeting held at CIEM (International Centre for Mathematical meetings) in Castro de Urdiales (Spain) on May 23--27, 2011, and pursued
at the 2nd ComPoSe Workshop on Geometric Graphs and Order Types held at TU Graz (Austria) on April 16--20, 2012. The authors warmly thank the organizers of these two meetings: 
Oswin Aichholzer, Ferran Hurtado, Paco Santos, and Birgit Vogtenhuber, as well as all the other participants, for providing such a great working environment. A preliminary version
of these results were presented by a subset of the authors to the European Workshop on Computational Geometry, held in Assisi (Italy) on March 19--21, 2012.

\small
\bibliographystyle{plain}
\bibliography{coloring}

\begin{thebibliography}{10}

\bibitem{ACCIKLSST11}
Greg Aloupis, Jean Cardinal, S{\'e}bastien Collette, Shinji Imahori, Matias
  Korman, Stefan Langerman, Oded Schwartz, Shakhar Smorodinsky, and Perouz
  Taslakian.
\newblock Colorful strips.
\newblock {\em Graphs and Combinatorics}, 27(3):327--339, 2011.

\bibitem{ACCLOR10}
Greg Aloupis, Jean Cardinal, S{\'e}bastien Collette, Stefan Langerman, David
  Orden, and Pedro Ramos.
\newblock Decomposition of multiple coverings into more parts.
\newblock {\em Discrete {\&} Computational Geometry}, 44(3):706--723, 2010.

\bibitem{ACCLS09}
Greg Aloupis, Jean Cardinal, S{\'e}bastien Collette, Stefan Langerman, and
  Shakhar Smorodinsky.
\newblock Coloring geometric range spaces.
\newblock {\em Discrete {\&} Computational Geometry}, 41(2):348--362, 2009.

\bibitem{DG}
Peter Brass, William O.~J. Moser, and J\'{a}nos Pach.
\newblock {\em Research Problems in Discrete Geometry}.
\newblock Springer, 2005.

\bibitem{CPST09}
Xiaomin Chen, J{\'a}nos Pach, Mario Szegedy, and G{\'a}bor Tardos.
\newblock Delaunay graphs of point sets in the plane with respect to
  axis-parallel rectangles.
\newblock {\em Random Struct. Algorithms}, 34(1):11--23, 2009.

\bibitem{DEGLST07}
Alice~M. Dean, William Evans, Ellen Gethner, Joshua~D. Laison, Mohammad~Ali
  Safari, and William~T. Trotter.
\newblock Bar k-visibility graphs.
\newblock {\em J. Graph Algorithms Appl.}, 11(1):45--59, 2007.

\bibitem{GV11}
Matt Gibson and Kasturi~R. Varadarajan.
\newblock Optimally decomposing coverings with translates of a convex polygon.
\newblock {\em Discrete {\&} Computational Geometry}, 46(2):313--333, 2011.

\bibitem{K07}
Bal{\'a}zs Keszegh.
\newblock Weak conflict-free colorings of point sets and simple regions.
\newblock In {\em CCCG}, pages 97--100, 2007.

\bibitem{KP11}
Bal{\'a}zs Keszegh and D{\"o}m{\"o}t{\"o}r P{\'a}lv{\"o}lgyi.
\newblock Octants are cover decomposable.
\newblock {\em CoRR}, abs/1101.3773, 2011.

\bibitem{KP12}
Bal{\'a}zs Keszegh and D{\"o}m{\"o}t{\"o}r P{\'a}lv{\"o}lgyi.
\newblock Octants are cover-decomposable into many coverings.
\newblock {\em CoRR}, abs/1207.0672, 2012.

\bibitem{P86}
J{\'a}nos Pach.
\newblock Covering the plane with convex polygons.
\newblock {\em Discrete {\&} Computational Geometry}, 1:73--81, 1986.

\bibitem{PT10b}
J{\'a}nos Pach and G{\'a}bor Tardos.
\newblock Coloring axis-parallel rectangles.
\newblock {\em J. Comb. Theory, Ser. A}, 117(6):776--782, 2010.

\bibitem{PT11}
J\'{a}nos Pach and G\'{a}bor Tardos.
\newblock Tight lower bounds for the size of epsilon-nets.
\newblock In {\em Proceedings of the 27th annual ACM symposium on Computational
  Geometry}, SoCG '11, pages 458--463, 2011.

\bibitem{PTT05}
J{\'a}nos Pach, G{\'a}bor Tardos, and G{\'e}za T{\'o}th.
\newblock Indecomposable coverings.
\newblock In {\em CJCDGCGT}, pages 135--148, 2005.

\bibitem{PT09}
J{\'a}nos Pach and G{\'e}za T{\'o}th.
\newblock Decomposition of multiple coverings into many parts.
\newblock {\em Comput. Geom.}, 42(2):127--133, 2009.

\bibitem{P10}
D{\"o}m{\"o}t{\"o}r P{\'a}lv{\"o}lgyi.
\newblock Indecomposable coverings with concave polygons.
\newblock {\em Discrete {\&} Computational Geometry}, 44(3):577--588, 2010.

\bibitem{PT10}
D{\"o}m{\"o}t{\"o}r P{\'a}lv{\"o}lgyi and G{\'e}za T{\'o}th.
\newblock Convex polygons are cover-decomposable.
\newblock {\em Discrete {\&} Computational Geometry}, 43(3):483--496, 2010.

\bibitem{V10}
Kasturi~R. Varadarajan.
\newblock Weighted geometric set cover via quasi-uniform sampling.
\newblock In {\em Proceedings of the 42nd ACM symposium on Theory of
  computing}, STOC '10, pages 641--648, 2010.

\end{thebibliography}

\end{document}